\newtheorem{theorem}{Theorem}[section]
\newtheorem{lemma}[theorem]{Lemma}
\newtheorem{claim}[theorem]{Claim}
\renewcommand{\th}{th\xspace}
\newcommand{\atgen}{\symbol{'100}}
\newcommand{\SarielThanks}[1]{\thanks{Department of Computer Science;
      University of Illinois; 201 N. Goodwin Avenue; Urbana, IL,
      61801, USA; {\tt sariel\atgen{}uiuc.edu}; {\tt
         \url{http://www.uiuc.edu/\string~sariel/}.} #1}}
\newcommand{\BernardThanks}{\thanks{Department of Mathematical
      Sciences; University of Illinois; 1409 W. Green St.; Urbana, IL,
      61801, USA; {\tt lidicky\atgen{}illinois.edu}.} }
 \newcommand{\brc}[1]{\left\{ {#1} \right\}}
\newcommand{\pth}[2][\!]{#1\left({#2}\right)}
\newcommand{\cardin}[1]{\left| {#1} \right|}
\renewcommand{\th}{th\xspace}
\newcommand{\CHX}[1]{{\mathcal{CH}}\pth{#1}}
\definecolor{blue25}{rgb}{0.0,0,0.85} \newcommand{\emphic}[2]{
   \textcolor{blue25}{
      \textbf{\emph{#1}}}
   \index{#2}} \newcommand{\emphi}[1]{\emphic{#1}{#1}}
\newcommand{\Grid}[1]{G_{#1}}
\newcommand{\Line}{\ell}
\newcommand{\ceil}[1]{\left\lceil {#1} \right\rceil}
\newcommand{\floor}[1]{\left\lfloor {#1} \right\rfloor}
\newcommand{\Lines}[1]{L_{#1}}
\newcommand{\clmlab}[1]{\label{claim:#1}}
\newcommand{\clmref}[1]{Claim~\ref{claim:#1}}
\newcommand{\figlab}[1]{\label{fig:#1}}
\newcommand{\figref}[1]{Figure~\ref{fig:#1}}
\newcommand{\lemlab}[1]{\label{lemma:#1}}
\newcommand{\lemref}[1]{Lemma~\ref{lemma:#1}}
\newcommand{\PntSet}{P}
\newcommand{\VecSet}[1]{\mathcal{V}_{#1}}
\newcommand{\Barany}{B{\'a}r{\'a}ny\xspace}
\newcommand{\pnt}{\mathsf{p}}
\newcommand{\depthP}[2]{\mathsf{d}_{#1}\pth{#2}}
\newcommand{\Vertices}[1]{V \pth{#1}}
\begin{document}

\title{Peeling the Grid}

\author{
   Sariel Har-Peled\SarielThanks{Work on this paper was partially
      supported by a NSF AF award CCF-0915984.}
   \and
   Bernard Lidick\'{y}\BernardThanks{}
}

\date{\today}

\maketitle

\begin{abstract}
    Consider the set of points formed by the integer $n \times n$
    grid, and the process that in each iteration removes from the
    point set the vertices of its convex-hull. Here, we prove that the
    number of iterations of this process is $O\pth{n^{4/3}}$; that is,
    the number of convex layers of the $n\times n$ grid is
    $\Theta\pth{n^{4/3}}$.
\end{abstract}

\newpage

\section{Introduction}

For many algorithms, the worst case behavior is rarely encountered in
practice. This is because the worst case behavior might require a
degenerate and convoluted input. To address this gap between the worst
case analysis and a real world behavior, a considerable amount of
research was spent on analyzing algorithms and discrete geometric
structures under certain assumptions on the input, including
\begin{inparaenum}[(i)]
    \item realistic input models \cite{bksv-rimga-02},
    \item fatness \cite{abes-ibufop-11},
    \item randomness, etc.
\end{inparaenum}

\begin{figure}
    \centerline{
       \begin{tabular}{ccc}
           \qquad
           {\includegraphics[page=1]{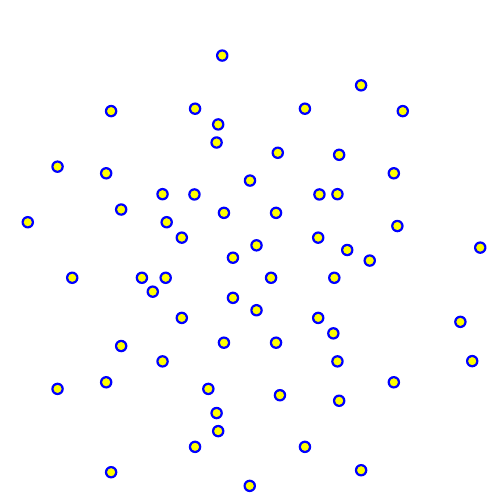}}
           \qquad
           &
           \qquad
           \qquad
           &
           \includegraphics[page=2]{layers}
       \end{tabular}}
    \caption{A point set and its decomposition into convex layers.}
    \figlab{layers}
\end{figure}

\paragraph{Random points.}

There is a significant amount of work on the geometric behavior of
random point sets \cite{rs-udkhv-63, r-slcdn-70, ww-sg-93, b-rplpc-08,
   obsc-stcav-00, jn-osdpv-04}. The question of how the Voronoi
diagram or the convex-hull of a point set randomly generated inside a
convex domain behaves had received considerable attention. In
particular, it is known that for a set of $n$ points chosen uniformly
in the unit square, the expected complexity of the convex-hull is $O(
\log n)$, and $O(n^{1/3})$ if the domain is a disk (this bound holds
for any convex shape).

\paragraph{Grid points.}
Surprisingly, the known results on uniformly sampled points match the
results known for the grid point set. For example, the number of
vertices of the convex hull of any subset of the $\sqrt{n} \times
\sqrt{n}$ grid is $O\pth{n^{1/3}}$, which matches the bound for the
random points. This phenomena holds for many similar scenarios, see
the survey by \Barany \cite{b-rplpc-08}.

\paragraph{Convex layers.}
The decomposition of a point set into convex layers is one possible
way to measure the depth of a point inside the point set. Formally,
the \emphi{convex depth} of a point $\pnt$ in a point set $\PntSet$ is
$\depthP{\pnt}{\PntSet} = 1$ if $\pnt$ is a vertex of the convex-hull
of $\PntSet$, and it is $\depthP{\pnt}{\PntSet} = 1+
\depthP{\pnt}{\PntSet \setminus \Vertices{\CHX{\PntSet}}}$ otherwise,
where $\CHX{\PntSet}$ denotes the convex-hull of $\PntSet$ and
$\Vertices{\CHX{\PntSet}}$ denotes the set of its vertices
\footnote{A point of $\PntSet$ is a vertex of the convex-hull only if
   it is a corner of the convex-hull. Formally, $\pnt$ is a vertex of
   the convex-hull of $\PntSet$ is $\CHX{\PntSet} \neq \CHX{\PntSet
      \setminus \brc{\pnt}}$.}. 
This partitions the point set into convex-layers, as depicted in
\figref{layers}. In particular, if the points rise out of physical
measurements (that might contain noise), a point with large convex
depth is unlikely to be an outlier. This is one possible definition of
robust statistics for points, although this definition has its
limitations, see \cite{rs-crsdm-04} for details.  In particular,
Chazelle \cite{c-clps-85} provided an $O\pth{n \log n}$ time algorithm
for computing all the convex layers for a set of points in the plane.

For a set of $n$ points picked uniformly inside a bounded convex
domain in $\mathbb{R}^d$, it is known that the expected number of convex layers is
$\Theta\pth{n^{2/(d+1)}}$ \cite{d-co-04}.

\paragraph{Our results.}
In this paper, we are prove that the number of convex layers of the $n
\times n$ grid is $\Theta\pth{n^{4/3}}$. This bound is quite
surprising -- indeed, as demonstrated by~\figref{c:h}, the peeling
process starts out quite slowly, the first three layers having $4, 8,
8$ vertices (independent of the value of $n$), respectively. A priori,
it is not clear why this process accelerates and contains more
vertices. Furthermore, the maximum number of vertices in convex
position in an $n\times n$ grid is $O\pth{n^{2/3}}$ (this is well
known, see \lemref{lower:bound}). Namely, somewhat surprisingly, a
constant fraction of the layers have asymptotically maximum size. Our
result matches the known result for random points. Note, that although
the bounds are similar, the proof for the random point set does not
carry over to the grid case.

We also observe that the number of convex layers is $\Omega(n^2)$ if
the grid of $n \times n$ points is allowed to be non-uniform (instead
of the integer grid used above).  Naturally, in this construction,
where every point is on two lines where each has $n$ points.

\section{Peeling the grid}

Let $\PntSet_0 = \Grid{n} = \brc{1,\ldots,n}^2$, be the $ n\times n$
integer grid. In the $i$\th iteration, consider the convex-hull $C_i =
\CHX{\PntSet_{i-1}}$, for $i=1, \ldots$. Let $V_i$ be the set of
vertices of $C_i$. Naturally, we consider a grid point to be a
\emphi{vertex} only if it is a corner of the convex-hull, and as such
grid points falling in the middle of edges of $C_i$, are not in
$V_i$. Now, let $\PntSet_{i} = \PntSet_{i-1} \setminus V_i$.  In
words, we start with the $n \times n$ grid, and peel away the vertices
of the convex-hull, and we repeat this process till all the grid
points of $\Grid{n}$ are removed.  Let $\tau(n)$ be the number of
iterations, till $\PntSet_i$ is an empty set. Here we are interested
in the behavior of $\tau(n)$. See \figref{c:h} for an example of how
the generated polygons look like.

\begin{figure}[tb]
    \centerline{\includegraphics{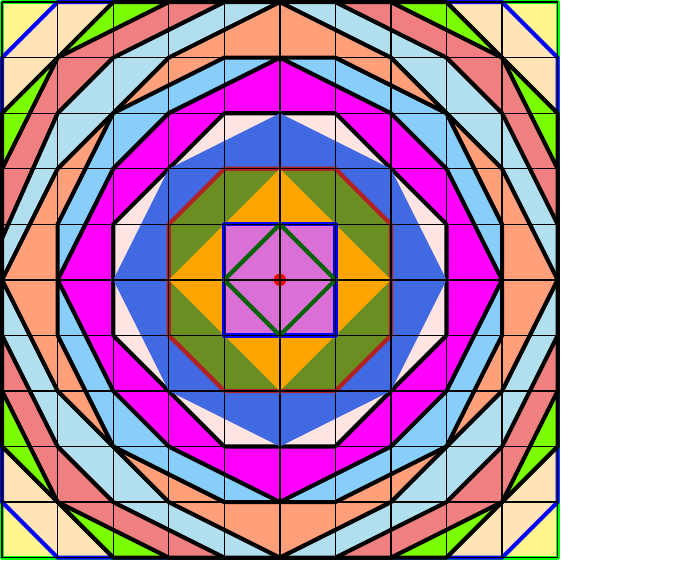}}
    \caption{The polygons generated while peeling the $11\times 11$
       integer grid.}
    \figlab{c:h}
\end{figure}

\subsection{A lower bound on $\tau(n)$}

The following is well known, and we include a proof for the sake of
completeness.

\begin{lemma}
    Given any convex set $C$ in the plane, it can have at most
    $O\pth{n^{2/3}}$ vertices of $\Grid{n}$.
    
    \lemlab{lower:bound}
\end{lemma}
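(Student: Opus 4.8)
The plan is to reduce the statement to a counting argument about short integer vectors. First I would observe that we may assume $C$ is exactly $\CHX{C \cap \Grid{n}}$, and let $Q$ be the convex polygon whose vertices are the $m$ grid points of $C$ that lie on $\partial C$; the goal is to show $m = O\pth{n^{2/3}}$. Since all vertices of $Q$ lie in $\Grid{n}$, every edge of $Q$ is a nonzero integer vector, and because $Q$ is convex the directions of its edges are pairwise distinct.

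Next I would group the edges by the quadrant of their direction. Traversing $\partial Q$ counterclockwise, the edge direction angle increases monotonically, so the edges whose direction lies in $[0,\pi/2)$ form one contiguous arc, and likewise for the other three quadrants; let $m_1,\dots,m_4$ be the numbers of edges in these four groups, so $m = m_1+m_2+m_3+m_4$. Fix the first group and write its edge vectors as $v_1,\dots,v_{m_1}$ with $v_i=(a_i,b_i)$, $a_i\ge 0$, $b_i\ge 0$. Since these edges form a monotone arc inside an $n\times n$ box, the sum of their horizontal extents $\sum_i a_i$ equals the net rightward displacement along the arc, which is less than $n$, and similarly $\sum_i b_i < n$; hence $\sum_i \pth{a_i+b_i} < 2n$. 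The analogous bound (up to sign changes) holds for each of the other three groups.

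The crux is then a pigeonhole estimate: the number of integer vectors $(a,b)$ with $a,b\ge 0$ and $a+b\le k$ is $\binom{k+2}{2}=O\pth{k^2}$. Because the vectors $v_1,\dots,v_{m_1}$ are pairwise distinct, ordering them by increasing value of $a_i+b_i$ forces the $j$-th one to satisfy $a_j+b_j \ge c\sqrt{j}$ for an absolute constant $c>0$. Therefore $2n > \sum_{j=1}^{m_1}\pth{a_j+b_j} \ge c\sum_{j=1}^{m_1}\sqrt{j} = \Omega\pth{m_1^{3/2}}$, which gives $m_1 = O\pth{n^{2/3}}$. Summing the four resulting bounds yields $m = O\pth{n^{2/3}}$, and hence $C$ contains $O\pth{n^{2/3}}$ vertices of $\Grid{n}$.

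I expect the only real subtlety to be the monotone-arc bookkeeping: verifying that the edges in each directional quadrant are consecutive and that the sum of their coordinate extents is bounded by the side length of the grid rather than by something larger. Everything else is the standard estimate on the number of lattice points in a right triangle. One may equivalently phrase the final step using primitive integer vectors together with the fact that there are $\Theta\pth{k^2}$ primitive integer vectors of norm at most $k$, but the distinctness of the edge vectors already suffices.
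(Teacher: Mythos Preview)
Your argument is correct, and it rests on the same two ingredients as the paper's proof: the edge vectors of a convex lattice polygon are pairwise distinct nonzero integer vectors, and their total ``length'' is $O(n)$. The packaging, however, differs. The paper bounds the Euclidean perimeter of $C$ by $4n$ and then introduces a threshold $\mu$: there are at most $4n/\mu$ edges of length $\ge \mu$, and at most $O(\mu^2)$ edges of length $<\mu$ (since only that many integer vectors lie in a disk of radius $\mu$); optimizing at $\mu \approx n^{1/3}$ gives the bound in one line. Your route---splitting edges into four directional quadrants, bounding $\sum_i (|a_i|+|b_i|)$ by $2n$ in each, and then using the summation estimate $\sum_{j\le m_1}\sqrt{j}=\Omega(m_1^{3/2})$---avoids the threshold parameter and is closer to the classical Andrews/Jarn\'{\i}k argument. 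The paper's version is shorter; yours is slightly more self-contained since it never needs to balance two terms. Both are standard and equivalent in strength.
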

\begin{proof}
    Consider a convex set $C$ such that all its vertices are points of
    $\Grid{n}$. The perimeter of $C$ is at most $4n$. The number of
    edges of the convex hull of $C$ of length at least (or equal to)
    $\mu$ is at most $4n/\mu$. The number of edges having length
    smaller than $\mu$ is bounded by the number of integer points of
    distance at most $\mu$ from the origin, and this number is bounded
    by $(2\mu+1)^2 = O(\mu^2)$. As such, the number of vertices of $C$
    is at most $O( n/\mu + \mu^2)$. Setting $\mu = \floor{ n^{1/3} }$
    then implies the claim.
\end{proof}

As such, $\cardin{V_i} = O\pth{n^{2/3}}$, which implies immediately
that $\tau(n) \geq n^2/ \max_i \cardin{V_i} = \Omega\pth{n^{4/3}}$.

\subsection{An upper bound on $\tau(n)$}

An integer vector $(x,y)$ is \emphi{primitive} if $\gcd(x,y) = 1$.
For an integer $\mu$, let $\VecSet{\mu}$ be the set of all primitive
non-zero

integer vectors $(x,y)$, where $0 \leq y< x \leq \mu$.  The following
is well known, and we sketch a proof for the sake of completeness.

\begin{lemma}
    \lemlab{primitive}
    
    We have $\cardin{\VecSet{\mu}} \geq c \mu^2$, for some constant $c
    > 0$.
\end{lemma}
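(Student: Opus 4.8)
The plan is to recognize $\cardin{\VecSet{\mu}}$ as (essentially) the number of coprime pairs in a square of side $\mu$ and to estimate that count by a standard Möbius / inclusion--exclusion sieve on the greatest common divisor. For a fixed $x \in \ISet{\mu}$, the number of $y$ with $0 \le y < x$ and $\gcd(x,y)=1$ is exactly Euler's totient $\varphi(x)$ (for $x=1$ this counts the pair $(1,0)$, and $\varphi(1)=1$), so $\cardin{\VecSet{\mu}} = \sum_{x=1}^{\mu}\varphi(x)$. Hence it suffices to show that this partial sum of the totient function is $\Omega\pth{\mu^2}$.

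For the sieve, let $g(m)$ be the number of pairs $(x,y)\in\ISet{m}^2$ with $\gcd(x,y)=1$. Classifying every pair in $\ISet{\mu}^2$ by the value $d=\gcd(x,y)$ and rescaling by $d$ gives the exact identity $\mu^2 = \sum_{d=1}^{\mu} g\pth{\floor{\mu/d}}$. Since $g(m)\le m^2$ always and $\floor{\mu/d}\le \mu/d$, isolating the $d=1$ term yields $\mu^2 \le g(\mu) + \sum_{d\ge 2}(\mu/d)^2 = g(\mu) + \mu^2\pth{\pi^2/6 - 1}$, so $g(\mu) \ge \pth{2 - \pi^2/6}\mu^2$, a positive constant times $\mu^2$. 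Finally I relate $g(\mu)$ back to $\cardin{\VecSet{\mu}}$: swapping coordinates is an involution on the coprime pairs in $\ISet{\mu}^2$ whose only fixed point is $(1,1)$, so exactly $(g(\mu)-1)/2$ of them satisfy $y<x$, and each such pair lies in $\VecSet{\mu}$; together with $(1,0)$ this gives $\cardin{\VecSet{\mu}} = \pth{g(\mu)+1}/2 \ge c\mu^2$ with, say, any $c < 1 - \pi^2/12$.

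I do not expect a genuine obstacle here --- consistent with the statement being ``well known'' --- the work is entirely bookkeeping: keeping track of the boundary cases $y=0$ and $x=y$, handling the floor functions in the sieve identity, and checking that the constant survives down to small $\mu$ (which is immediate since the displayed relation $\cardin{\VecSet{\mu}}=(g(\mu)+1)/2$ is exact and $\VecSet{1}=\brc{(1,0)}\neq\emptyset$). An equally short alternative is simply to quote the classical asymptotic $\sum_{x\le\mu}\varphi(x) = \tfrac{3}{\pi^2}\mu^2 + O(\mu\log\mu)$ for the summatory totient function, which gives the claim at once; the sieve above is just a self-contained derivation of a (weaker, but sufficient) lower bound of that form.
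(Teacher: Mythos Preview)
Your proof is correct and follows essentially the same approach as the paper: both identify $\cardin{\VecSet{\mu}} = \sum_{x=1}^{\mu}\varphi(x)$ and then invoke the classical quadratic lower bound for the summatory totient. The only difference is that the paper simply cites Hardy--Wright for the last step, whereas you supply a self-contained sieve derivation (and explicitly note that citing the asymptotic would also suffice).
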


\begin{proof}
    For a fixed $x$, consider the vectors $(x,y)$ in $\VecSet{\mu}$,
    such that $y < x$, and $\gcd(x,y) = 1$. The number of such vectors
    is the number of integer values of $y$ that are relative prime to
    $x$, and this number is the Euler's totient function $\phi(x)$.
    As such, $\cardin{\VecSet{\mu}} \geq \sum_{i=1}^\mu \phi(i) \geq c
    \mu^2$.  the last step follows from known bounds, see
    \cite{hw-tn-65}.
\end{proof}

In the following, we pick $\mu$ to be smaller than $n/4$, and $n$ is
sufficiently large.

For every vector $v \in \VecSet{\mu}$, consider the set $\Lines{v}$ of
all lines having direction $v$ that intersect the grid points
$\Grid{n}$. Every line in $\Lines{v}$ contains at most
$1+\floor{(n-1)/v_x}$ points of the grid (and most lines in this
family contain at least $\floor{(n-1)/v_x}$ points of the grid (the
only problematic lines are the ones that have short intersection with
the square $[1,n]^2$ because of the corners)).

\begin{figure}[t]
    \begin{tabular}{ccc}
        \includegraphics[page=1,width=0.35\linewidth]{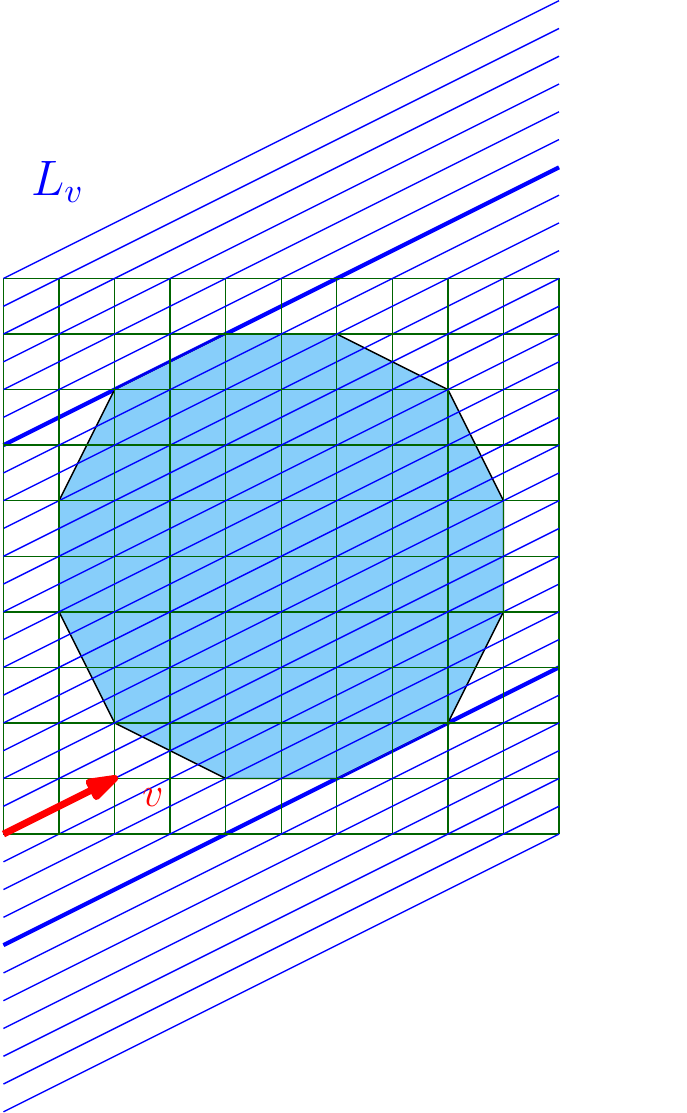}
        &
        \hspace{-1cm}
        \includegraphics[page=2,width=0.35\linewidth]{active}
        &
        \hspace{-1cm}
        \includegraphics[page=3,width=0.35\linewidth]{active}\\
        (A) & 
        \hspace{-1cm} (B) &
        \hspace{-1cm} (C)
    \end{tabular}
    \caption{(A) An active direction $v$, and the set of lines $L_v$.
       (B) An inactive iteration for $v$. (C) The next iteration --
       the two ``old'' tangent lines no longer intersect the current
       convex layer.}
    \figlab{active:inactive}
\end{figure}

\begin{claim}
    \lemlab{active}
    
    For $n>10$, $\mu < n/4$ and $v \in \VecSet{\mu}$, we have that
    $\cardin{\Lines{v}} \leq 4n \mu$.
\end{claim}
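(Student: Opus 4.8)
The plan is to count the lines in $\Lines{v}$ by their ``intercept''. Write $v = (v_x, v_y)$. Since $v$ is a non-zero integer vector, every line of direction $v$ is the zero set of the affine function $(x,y) \mapsto v_y x - v_x y - c$ for a unique real number $c$, and distinct values of $c$ give distinct (parallel) lines. Such a line passes through a grid point $(a,b) \in \Grid{n}$ if and only if $c = v_y a - v_x b$. Hence $\cardin{\Lines{v}}$ equals the number of distinct values attained by the integer-valued function $f(a,b) = v_y a - v_x b$ as $(a,b)$ ranges over $\ISet{n}^2$.

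First I would bound the range of $f$. As $v_y \ge 0$ and $v_x \ge 1$, the minimum of $f$ over $\ISet{n}^2$ is attained at $(a,b) = (1,n)$ and the maximum at $(a,b) = (n,1)$, so every value of $f$ lies in the interval $[\, v_y - n v_x,\ n v_y - v_x\,]$, which has length $(n-1)(v_x + v_y)$. Since the two endpoints are integers, $f$ takes at most $(n-1)(v_x+v_y) + 1$ distinct integer values, giving $\cardin{\Lines{v}} \le (n-1)(v_x+v_y) + 1$.

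It remains to use the constraint on $v$: by definition of $\VecSet{\mu}$ we have $0 \le v_y < v_x \le \mu$, so $v_x + v_y \le 2\mu - 1$, and hence
\[
   \cardin{\Lines{v}} \;\le\; (n-1)(2\mu-1) + 1 \;\le\; 2n\mu \;\le\; 4n\mu,
\]
as claimed. (This argument actually gives the stronger bound $\cardin{\Lines{v}} \le 2n\mu$ and does not seem to use the hypotheses $n > 10$ and $\mu < n/4$; presumably those are reserved for later parts of the proof.)

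I do not expect a genuine obstacle here — it is a one-step interval-counting estimate once lines are encoded by their intercept $c$. The two points that call for a little care are: confirming that $c$ parametrizes the lines of direction $v$ bijectively (in particular, the degenerate direction, which primitivity forces to be $v = (1,0)$, yields $\Lines{v}$ equal to the $n$ horizontal grid lines, consistent with the bound), and correctly pinning down the extreme grid points $(1,n)$ and $(n,1)$ at which $f$ attains its minimum and maximum.
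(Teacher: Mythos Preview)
Your argument is correct and in fact yields the sharper bound $\cardin{\Lines{v}} \le 2n\mu$, without using either of the hypotheses $n>10$ or $\mu<n/4$. The parametrization of the lines by the integer intercept $c = v_y a - v_x b$ is sound (distinct $c$'s give distinct parallel lines because $v\neq 0$), the extreme values at $(1,n)$ and $(n,1)$ are correctly located from the sign pattern $v_y\ge 0$, $v_x\ge 1$, and the final chain of inequalities is clean.

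The paper takes a different route: it enlarges the grid to $\Grid{2n}$, observes that every line of $\Lines{v}$ meets $[1,2n]^2$ in a segment whose $x$-projection has length at least $n$, and hence carries at least $n/\mu$ points of $\Grid{2n}$; a double-counting against the $4n^2$ points of $\Grid{2n}$ then gives $\cardin{\Lines{v}} \le 4n^2/(n/\mu)=4n\mu$. That argument is more geometric and is where the side conditions $n>10$, $\mu<n/4$ are used to make the ``long enough intersection'' step go through cleanly. Your intercept-counting approach is more elementary, loses nothing to floors or enlargement, and explains why those hypotheses are not intrinsically needed for this particular claim.
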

\begin{proof}
    A line $\Line \in \Lines{v}$ that intersects $\Grid{n}$ has an
    intersection of length at least $n$ with the enlarged square
    $[1,2n]^2$. Specifically, the projection of the intersection on
    the $x$ axis has length at least $n$. Since $\ell$ has direction
    $v$ and it contains a grid point, it follows that it has grid
    points on it, that are of distance $||v||$ from each other. On the
    projection, the distance between these points is $v_x$. As such,
    this intersection contains at least $1+ \floor{n /v_x} \geq n/\mu$
    points of the grid $\Grid{2n}$ on it. In particular, the number of
    such lines can be at most $4n^2/(n /\mu) = 4n \mu$.
\end{proof}

Since the lines of $\Lines{v}$ cover all the grid points of
$\Grid{n}$, and the vertices of $C_i$ are grid points, it follows that
$\Lines{v}$ always contains two lines that are tangent to $C_i$. If
these two tangent lines intersect $\partial C_i$ along an non-empty
edge, then $v$ is \emphi{active} at iteration $i$ (i.e., $v$ is not
active if the two tangents touch $C_i$ at a vertex).

In the following, we slightly abuse notations and use $\Lines{v} \cap
C_i$ to denote the set of all lines of $\Lines{v}$ that have non-empty
intersection with $C_i$. 

\begin{claim}
    \clmlab{obvious}
    
    If $v$ is not active at iteration $i$, then $\cardin{\Lines{v}
       \cap C_{i+1}} \leq \cardin{\Lines{v} \cap C_i} - 2$.
\end{claim}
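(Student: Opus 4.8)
The plan is to track what happens to the two tangent lines of $\Lines{v}$ that touch $C_i$ when $v$ is not active. By definition of inactivity, each of the two extreme lines $\Line_1,\Line_2 \in \Lines{v}$ tangent to $C_i$ meets $\partial C_i$ at a single vertex, say $u_1$ and $u_2$ respectively (and $u_1 \neq u_2$ since $C_i$ is two-dimensional once $n$ is large enough that early degenerate layers are past; the degenerate cases where $C_i$ is a segment or point are trivial and can be handled separately). These vertices $u_1,u_2$ are grid points and lie on $\partial C_i$, hence they are vertices of the convex hull $C_i$ and get peeled: $u_1,u_2 \notin \PntSet_i$, so $u_1,u_2 \notin C_{i+1}$.

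First I would argue that $\Line_1$ and $\Line_2$ do not meet $C_{i+1}$ at all. Since $\Line_1$ is a supporting line of $C_i$ and $C_i \cap \Line_1 = \{u_1\}$, every point of $C_i$ other than $u_1$ lies strictly on one side of $\Line_1$. Now $C_{i+1} = \CHX{\PntSet_i} \subseteq \CHX{\PntSet_{i-1} \setminus \{u_1\}}$, and since $u_1$ is an exposed vertex, $\CHX{\PntSet_{i-1}\setminus\{u_1\}} = \CHX{(V_i \setminus \{u_1\}) \cup \PntSet_i}$ is contained in the closed halfplane bounded by $\Line_1$ not containing $u_1$, but in fact avoids $\Line_1$ entirely: every vertex of $C_i$ other than $u_1$ is strictly interior to that halfplane, and every point of $\PntSet_i$ is in $C_i$, hence also strictly on that side. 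Thus $C_{i+1}$ lies strictly on one side of $\Line_1$, so $\Line_1 \cap C_{i+1} = \emptyset$; symmetrically $\Line_2 \cap C_{i+1} = \emptyset$.

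Next I would observe that $\Line_1$ and $\Line_2$ are genuinely two distinct lines of $\Lines{v}$ that both intersect $C_i$ (being tangent to it) but neither intersects $C_{i+1}$. Every other line of $\Lines{v}$ that meets $C_{i+1}$ also meets $C_i$, because $C_{i+1} \subseteq C_i$. Hence the set of lines of $\Lines{v}$ meeting $C_{i+1}$ is a subset of those meeting $C_i$, missing at least the two lines $\Line_1,\Line_2$, which gives $\cardin{\Lines{v}\cap C_{i+1}} \leq \cardin{\Lines{v}\cap C_i} - 2$, as claimed.

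The only real subtlety — and the step I expect to need the most care — is verifying that $\Line_1$ genuinely separates all of $C_{i+1}$ from the halfplane containing $u_1$, i.e. that no point of $\PntSet_i$ lies on $\Line_1$. This follows because $\Line_1$ supports $C_i$ and touches it only at $u_1$, so the only grid point of $\PntSet_{i-1}$ on $\Line_1$ is $u_1$ itself, and $u_1$ has been removed; one should just make sure the "touches only at $u_1$" hypothesis (inactivity) is used exactly here, and that the boundary cases where $C_i$ is lower-dimensional are dispatched at the outset.
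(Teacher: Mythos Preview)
Your argument is correct and follows essentially the same approach as the paper: identify the two tangent lines in $\Lines{v}$, use inactivity to say each meets $C_i$ at a single vertex, observe those vertices are peeled, and conclude the two lines miss $C_{i+1}$. You simply supply more detail (the supporting-halfplane argument and the inclusion $C_{i+1}\subseteq C_i$) where the paper just asserts the conclusion. One tiny slip of phrasing: the degenerate (lower-dimensional) layers occur at the \emph{end} of the process, not the beginning, but you correctly note they can be handled separately anyway.
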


\begin{proof}
    If $v$ is not active at iteration $i$ then a tangent $\Line$ to
    $C_i$ from $\Lines{v}$ intersects $C_i$ only at a vertex. But this
    vertex is being removed from the point set when computing
    $\PntSet_{i+1}$. In particular, the line $\Line$ no longer
    intersects $C_{i+1}$. The same argument also applies to the other
    tangent. This is demonstrated in \figref{active:inactive}.
\end{proof}

\begin{claim}
    Throughout the process, for a vector $v \in \VecSet{\mu}$, it can
    be inactive in at most $2n \mu$ iterations.
\end{claim}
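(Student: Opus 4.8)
The plan is to run a simple potential argument with $g_i := \cardin{\Lines{v}\cap C_i}$, the number of direction-$v$ lines that still meet the current convex layer. I would first establish that this quantity never increases over the course of the peeling. This is just the nesting of the hulls: since $\PntSet_i \subseteq \PntSet_{i-1}$ we have $C_{i+1} = \CHX{\PntSet_i} \subseteq \CHX{\PntSet_{i-1}} = C_i$, so any line of $\Lines{v}$ that intersects $C_{i+1}$ also intersects $C_i$. Hence $\Lines{v}\cap C_{i+1}\subseteq \Lines{v}\cap C_i$, and the sequence $g_1 \ge g_2 \ge \cdots \ge 0$ is monotone non-increasing; in particular \emph{active} iterations can only leave it unchanged or decrease it.

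Next I would feed in \clmref{obvious}: at every iteration $i$ at which $v$ is inactive, the two tangent lines of $\Lines v$ to $C_i$ touch $C_i$ only at vertices, which are removed, so $g_{i+1}\le g_i - 2$. Combining this with the monotonicity above, if $v$ is inactive in $k$ iterations total, then summing the per-iteration drops gives $g_1 - g_{\text{final}} \ge 2k$. Since $g_{\text{final}}\ge 0$ this yields $2k \le g_1$.

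Finally, bound the starting value: $g_1 = \cardin{\Lines v \cap C_1}\le \cardin{\Lines v}\le 4n\mu$ by \lemref{active}. Therefore $2k\le 4n\mu$, i.e.\ $k\le 2n\mu$, as claimed. There is no real obstacle here — the only point requiring care is exactly the one handled in the first step, namely that one must know active iterations never \emph{increase} $g_i$, so that the decrements contributed by the inactive iterations genuinely accumulate against the fixed ceiling $\cardin{\Lines v}$ rather than being offset later; this is immediate from $C_{i+1}\subseteq C_i$.
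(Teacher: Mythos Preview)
Your argument is correct and follows the same route as the paper: use \clmref{obvious} to see that each inactive iteration drops $\cardin{\Lines{v}\cap C_i}$ by two, then cap the total number of such drops by $\cardin{\Lines{v}}\le 4n\mu$ from \lemref{active}. You are simply more explicit than the paper in spelling out the monotonicity $C_{i+1}\subseteq C_i$ (so active iterations cannot undo the decrements), which the paper leaves implicit.
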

\begin{proof}
    Every time $v$ is not active, the number of lines of $\Lines{v}$
    that intersect the active convex-hull decreases by two, by
    \clmref{obvious}. By \lemref{active} there are at most $4n \mu$
    lines in the set $\Lines{v}$, and as such this can happen at most
    $4n \mu / 2$ times.
\end{proof}

If the process continues more than $M = 4n \mu$ iterations then every
vector in $\VecSet{\mu}$ is active in at least half of the
iterations. In particular, if $n_i$ is the number of active directions
at iteration $i$, then we have that
\begin{align*}
    \alpha = \sum_{i=1}^M n_i \geq
    2n \mu \cardin{\VecSet{\mu}}
    \geq
    2c n \mu^3,
\end{align*}
by \lemref{primitive}.

Observe, that if $n_i$ vectors are active at the $i$\th iteration,
then the convex hull of $C_i$ has at least $2n_i$ edges (and thus
vertices) at iteration $i$. As such, if we set $\mu =
\ceil{n^{1/3}/c^{1/3}} = \Theta\pth{n^{1/3}}$, we have that the total
number of vertices of the convex hulls in the first $M$ iterations is
at least
\begin{align*}
    2\alpha \geq 4c n \mu^3 \geq 4n^2,
\end{align*}
which is a contradiction, as the initial grid set has at most $n^2$
points. We conclude that the algorithm must terminate after $M = 4n
\mu = O \pth{n^{4/3}}$ iterations. We thus proved the following.

\begin{theorem}
    Starting with the grid $\Grid{n}$, consider the process that
    repeatedly removes the convex-hull vertices of the current set of
    vertices. This process takes $\Theta\pth{n^{4/3}}$ steps.
\end{theorem}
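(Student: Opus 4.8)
The plan is to prove the two bounds separately, with essentially all the work on the upper side. For the lower bound $\tau(n) = \Omega\pth{n^{4/3}}$ I would just combine \lemref{lower:bound}, which says every convex layer has $\cardin{V_i} = O\pth{n^{2/3}}$ vertices, with the fact that the process must eventually delete all $n^2$ grid points; hence $\tau(n) \geq n^2 / \max_i \cardin{V_i} = \Omega\pth{n^{4/3}}$.

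For the upper bound I would fix a scale parameter $\mu = \Theta\pth{n^{1/3}}$ and work with the set $\VecSet{\mu}$ of primitive directions of bounded slope. The structural idea is that for each $v \in \VecSet{\mu}$ the parallel lines of $\Lines{v}$ cover all of $\Grid{n}$, so at every iteration $i$ two of them are tangent to the current hull $C_i$; call $v$ \emph{active} at iteration $i$ if both tangents meet $C_i$ along a genuine edge. First I would record the two counting facts: $\cardin{\VecSet{\mu}} \geq c\mu^2$ from \lemref{primitive}, and $\cardin{\Lines{v}} \leq 4n\mu$ from \lemref{active}. Then comes the crucial monotonicity observation: since $C_{i+1} \subseteq C_i$ always, the quantity $\cardin{\Lines{v} \cap C_i}$ is non-increasing in $i$, and by \clmref{obvious} each iteration in which $v$ is inactive drops it by at least $2$ (the two tangent vertices get deleted, so their lines leave the hull forever). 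Starting from at most $4n\mu$, this bounds the number of inactive iterations of $v$ by $2n\mu$.

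The rest is a double count. Suppose, for contradiction, that the process ran for more than $M := 4n\mu$ iterations. Then every $v \in \VecSet{\mu}$ is active in at least $M/2 = 2n\mu$ of the first $M$ iterations, so letting $n_i$ denote the number of active directions at step $i$ we get $\sum_{i=1}^{M} n_i \geq 2n\mu\cdot\cardin{\VecSet{\mu}} \geq 2cn\mu^3$. On the other hand, distinct active directions are parallel to distinct edges of $C_i$ (an edge has a unique slope), and each active direction supplies two antipodal tangent edges, so $C_i$ has at least $2n_i$ vertices; hence the total number of vertices removed in the first $M$ iterations is at least $2\sum_{i=1}^{M} n_i \geq 4cn\mu^3$. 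Choosing $\mu = \ceil{\pth{n/c}^{1/3}} = \Theta\pth{n^{1/3}}$ makes $c\mu^3 \geq n$, so this total is at least $4n^2$, contradicting $\cardin{\PntSet_0} = n^2$. Therefore $\tau(n) \leq M = 4n\mu = O\pth{n^{4/3}}$, and combined with the lower bound we obtain $\tau(n) = \Theta\pth{n^{4/3}}$.

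The step I expect to be the real obstacle — and the one deserving the most care — is the monotonicity claim behind the inactive-iteration bound: one must be certain that once a line of $\Lines{v}$ stops meeting the current convex layer it can never meet a later one, and that an inactive iteration truly removes two such lines rather than one. This is exactly where the nesting $C_{i+1}\subseteq C_i$ and the convention that only corner vertices (never interior edge points) are deleted both enter, via \clmref{obvious}; everything else is bookkeeping with \lemref{primitive} and \lemref{active} together with the choice of $\mu$ balancing $n\mu^3$ against $n^2$.
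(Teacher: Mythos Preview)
Your proposal is correct and follows essentially the same argument as the paper: the lower bound via \lemref{lower:bound}, and the upper bound via the active/inactive direction double count using \lemref{primitive}, \lemref{active}, and \clmref{obvious}, with the same choice $\mu=\Theta\pth{n^{1/3}}$ and the same contradiction threshold $M=4n\mu$. Your added remark about the nesting $C_{i+1}\subseteq C_i$ guaranteeing that a line which leaves the hull never returns is a welcome clarification of a point the paper leaves implicit.
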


\section{Lower bound of $\Omega(n^2)$ for a 
   non-uniform grid}

This section is devoted to describing a set $M$ of $n^2$ points in the
plane where the peeling process takes $\Omega\pth{n^2}$ steps.  For
simplicity assume that $n=2k$ for some integer $k$.

Take a collection of $k$ squares $S_1,\ldots,S_k$ where $S_i$ has
length of its side $3^i$ and the squares are positioned such that
their centers coincide with the origin. Let $L$ be the set of $4k$
lines that are obtained by extending the segments of the squares into
lines.  Finally, let $M$ be the set of all intersections of lines in
$L$.  Notice that each line contains $2k$ points and that $|L| = 4k^2
= n^2$.  See \figref{grid:nsquare}.

\begin{figure}
    \centerline{
       \includegraphics[page=1]{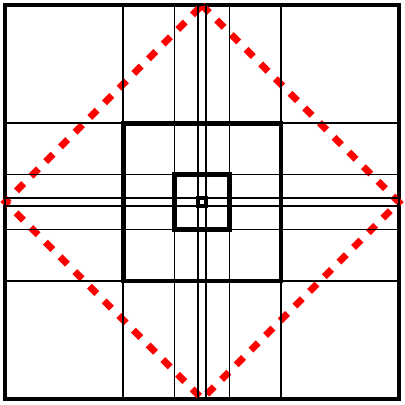}
       \hskip 5em
       \includegraphics[page=1]{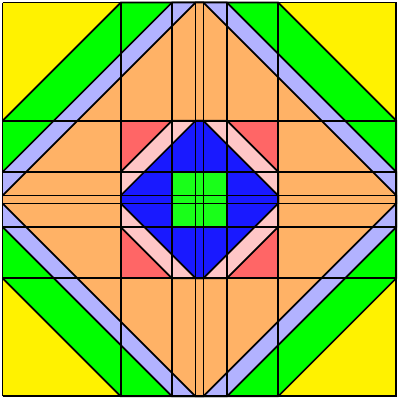}
    }
    \caption{A point set where the peeling process requires
       $\Omega\pth{n^2}$ steps.}
    \figlab{grid:nsquare}
\end{figure}

Let the peeling process partition $M$ into convex sets
$C_1,C_2,\ldots$.
\begin{claim}
    For every $C_i$ exists $S_j$ such that $C_i \subseteq S_j$.
\end{claim}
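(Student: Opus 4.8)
The plan is to prove this by strong induction on $i$, exploiting the nested structure of the squares $S_1 \subset S_2 \subset \cdots \subset S_k$ together with a geometric observation about which points can be vertices of a convex hull. The key structural fact I would isolate first is this: consider the set $M \cap S_j$ of points lying in (the closed square) $S_j$, and look at the points of $M$ that lie on the boundary $\partial S_j$. Because $S_j$ has side length $3^j$ while the next larger square $S_{j+1}$ has side length $3^{j+1}$ — so $S_{j+1}$ is three times larger and $S_j$ sits concentrically well inside it — the lines of $L$ coming from squares $S_{j+1}, \ldots, S_k$ cross the interior of $S_j$, and (this is the point to verify) every intersection point of two such "outer" lines that happens to fall inside or on $S_j$ actually lies strictly inside $S_j$, not on its boundary, and moreover the points of $M$ on $\partial S_j$ (the corners and edge-intersections of $S_j$ itself) form a convex polygon that contains all of $M \cap S_j$ in its convex hull. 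I would phrase it as: $\mathcal{CH}(M \cap S_j) = S_j$, i.e. the convex hull of the grid points inside $S_j$ is exactly the square $S_j$, with its vertices being the four corners of $S_j$.

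Granting that, the induction runs as follows. For the base case, $C_1 = \Vertices{\CHX{M}}$; since $M \subseteq S_k$ and the extreme points of $M$ are exactly the four corners of the largest square $S_k$ (all other points of $M$ lie inside $S_k$), we get $C_1 \subseteq S_k$. For the inductive step, suppose $C_1, \ldots, C_{i}$ have each been shown to lie in some square, and let $S_j$ be the smallest square among those containing $C_1, \ldots, C_i$; actually it is cleaner to track, at each stage, the smallest square containing \emph{all remaining points} $\PntSet_{i}$. So strengthen the induction hypothesis to: after peeling layers $C_1, \ldots, C_i$, the remaining point set $\PntSet_i = M \setminus (C_1 \cup \cdots \cup C_i)$ satisfies $\PntSet_i \subseteq S_{j}$ for some $j$, and in fact $\PntSet_i \supseteq$ all grid points strictly inside $S_{j}$ while missing the corners and boundary points of $S_{j}$ — more precisely, that the peeling has already stripped off everything on $\partial S_j$. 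Then $C_{i+1} = \Vertices{\CHX{\PntSet_i}}$, and since $\PntSet_i \subseteq S_j$, certainly $C_{i+1} \subseteq S_j$, proving the claim for index $i+1$. To keep the strengthened hypothesis going, I would observe that once all boundary points of $S_j$ are removed, the outermost remaining points all lie in the (closed) next-smaller concentric square $S_{j-1}$ — because the region $S_j \setminus S_{j-1}$ contains no points of $M$ other than those on $\partial S_j$ and on the four lines extending $S_j$'s sides, and these latter lines, restricted to $S_j \setminus S_{j-1}$, only carry points of $M$ that lie on $\partial S_j$ itself. Hence after finitely many more peeling steps we descend from "contained in $S_j$" to "contained in $S_{j-1}$", and the induction is maintained.

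The main obstacle I anticipate is the geometric bookkeeping in the strengthened hypothesis: one must verify carefully that no intersection point of two lines of $L$ lands on the boundary of an intermediate square $S_j$ except for the "expected" points (corners of $S_j$ and crossings of $S_j$'s own side-lines with outer side-lines), and that the ratio $3$ between consecutive side lengths is genuinely enough to guarantee $S_{j-1}$ is strictly interior to $S_j$ with no stray $M$-points in the annulus $S_j \setminus S_{j-1}$ off of $\partial S_j$. The factor $3$ (rather than $2$) is presumably chosen precisely so that the midpoints/intersection points behave well and the annuli are "thick enough"; I would make that quantitative by writing coordinates for a generic intersection point of two lines from squares $S_a$ and $S_b$ with $a,b \le k$ and checking when it can have a coordinate of absolute value exactly $3^j/2$. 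Once that case analysis is pinned down, the induction itself is routine, and the claim — hence the eventual $\Omega(n^2)$ lower bound, since each $S_j$ can only host $O(j)$-ish many nested convex layers before they shrink into $S_{j-1}$, and there are $k = n/2$ scales — follows.
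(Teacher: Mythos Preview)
Your proposal has a genuine gap rooted in a misreading of the claim. In the paper, $S_j$ denotes the square itself (the four sides), not the filled region; the assertion $C_i \subseteq S_j$ means that every point of the $i$th convex layer lies \emph{on the boundary} of one of the concentric squares. This is exactly what yields $|C_i|\le 8$ afterwards, since a strictly convex set can have at most two points on each of the four sides. Under your reading (``$C_i$ is contained in the closed region bounded by $S_j$'') the statement is vacuous --- take $j=k$ --- and your induction only ever establishes this vacuous version: from $\PntSet_i \subseteq S_j$ you conclude $C_{i+1}\subseteq S_j$, which says nothing about whether $C_{i+1}$ lies on $\partial S_j$ or reaches into $\partial S_{j-1}$ as well.

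What is missing is the one geometric fact the paper's argument turns on. Take the largest $j$ with $C_i\cap \partial S_j\neq\emptyset$. By the $D_4$ symmetry of the construction and strict convexity, $C_i\cap\partial S_j$ is either the four corners of $S_j$ or a symmetric set of eight non-corner points, two per side. In the second case their convex hull already contains the rotated square $D=\{|x|+|y|\le 3^{j}/2\}$, and since $3^{l}<3^{j}/2$ for every $l<j$, we have $S_l\subset D$ for all smaller squares. Hence any point of $M$ not on $\partial S_j$ lies strictly inside $\CHX{C_i\cap\partial S_j}$ and cannot be a vertex of $C_i$; so $C_i\subseteq\partial S_j$. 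Your proposal never isolates this containment $S_{j-1}\subset D\subset\CHX{C_i\cap\partial S_j}$, and without it there is no reason the peeling cannot bite into $\partial S_{j-1}$ before $\partial S_j$ is exhausted. Finally, your closing sentence has the inequality backwards: saying each $S_j$ hosts ``$O(j)$-ish'' layers is an \emph{upper} bound on the number of layers, whereas the $\Omega(n^2)$ conclusion needs the \emph{lower} bound $|C_i|\le 8$, i.e.\ at least $(8j-4)/8$ layers per square.
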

\begin{proof}
    Let $j$ be the largest index such that $C_i \cap S_j \neq
    \emptyset$.  Notice that $C_i$ is centrally symmetric as $M$ is
    centrally symmetric and this property is preserved by the peeling
    process.  If $|C_i \cap S_j| = 4$ then $C_i \cap S_j$ are the four
    corners of $S_j$ and thus $|C_i| = 4$ as $C_i$ is strictly convex.
    Hence $|C_i \cap S_j| = 8$ and $C_i$ contains points on both
    vertical and horizontal lines of $S_j$ in every quadrant.  Let $D$
    be the square with corners being intersections the axis and
    $\CHX{S_j}$.  See \figref{grid:nsquare} on the left.  Notice that
    $S_l \subset D \subset \CHX{C_i \cap S_j}$ for every $l < j$.
    Therefore $C_i = C_i \cap S_j \subseteq S_j$.
\end{proof}

The previous claim implies that $|C_i| \leq 8$ for every $i$. Hence
the peeling process needs at least $n^2/8 = \Omega\pth{n^2}$ steps.

\section{Conclusions}

The most natural question left by our work, is the prove similar
bounds in higher dimensions. This seems quite challenging, and we
leave it as an open problem for further research.

Let us also note for an interested reader that, according to experiments,
the layers in the peeling process are getting close to circles as the
process is advancing.

\section*{Acknowledgments}

We thank Robert Jamison for many useful discussions on this and
related problems. The authors also would like to thank Imre \Barany
for providing relevant information.


\begin{thebibliography}{dBKSV02}

\bibitem[AdBES11]{abes-ibufop-11}
\href{http://cis.poly.edu/~aronov/}{B.~{Aronov}}, \href{http://www.win.tue.nl/~mdberg/}{M.~de~{Berg}}, E.~Ezra, and \href{http://www.math.tau.ac.il/~michas}{M.~{Sharir}}.
\newblock Improved bound for the union of fat objects in the plane.
\newblock manuscript, 2011.

\bibitem[B{\'a}r08]{b-rplpc-08}
I.~B{\'a}r{\'a}ny.
\newblock Random points and lattice points in convex bodies.
\newblock {\em Bulletin Amer. Math. Soc.}, 45(3):339--365, 2008.

\bibitem[Cha85]{c-clps-85}
\href{http://www.cs.princeton.edu/~chazelle/}{B.~{Chazelle}}.
\newblock On the convex layers of a planar set.
\newblock {\em IEEE Trans. Inform. Theory}, IT-31(4):509--517, July 1985.

\bibitem[Dal04]{d-co-04}
K.~Dalal.
\newblock Counting the onion.
\newblock {\em Random Struct. Alg.}, 24(2):155--165, 2004.

\bibitem[dBKSV02]{bksv-rimga-02}
\href{http://www.win.tue.nl/~mdberg/}{M.~de~{Berg}}, M.~J. Katz, {A. F.}~{van~der} Stappen, and J.~Vleugels.
\newblock Realistic input models for geometric algorithms.
\newblock {\em Algorithmica}, 34:81--97, 2002.

\bibitem[HW65]{hw-tn-65}
G.~Hardy and E.~Wright.
\newblock {\em The Theory of Numbers}.
\newblock Oxford University Press, London, England, 4th edition, 1965.

\bibitem[JN04]{jn-osdpv-04}
F.~{Jarai-Szabo} and Z.~{Neda}.
\newblock On the size-distribution of poisson voronoi cells.
\newblock {\em eprint arXiv:cond-mat/0406116}, June 2004.

\bibitem[OBSC00]{obsc-stcav-00}
A.~Okabe, B.~Boots, K.~Sugihara, and S.~N. Chiu.
\newblock {\em Spatial tessellations: Concepts and applications of {V}oronoi
  diagrams}.
\newblock Probability and Statistics. Wiley, 2nd edition edition, 2000.

\bibitem[Ray70]{r-slcdn-70}
H.~Raynaud.
\newblock Sur l'enveloppe convex des nuages de points aleatoires dans
  {$R^{n}$}.
\newblock {\em J. Appl. Probab.}, 7:35--48, 1970.

\bibitem[RS63]{rs-udkhv-63}
A.~R{\'e}nyi and R.~Sulanke.
\newblock {\"U}ber die konvexe {H{\"u}lle} von $n$ zuf{\"a}llig gerw{\"a}hten
  {Punkten} {I}.
\newblock {\em Z. Wahrsch. Verw. Gebiete}, 2:75--84, 1963.

\bibitem[RS04]{rs-crsdm-04}
P.J. Rousseeuw and A.~Struyf.
\newblock Computation of robust statistics: Depth, median, and related
  measures.
\newblock In J.~E. Goodman and \href{http://cs.smith.edu/~orourke/}{J.~{O'Rourke}}, editors, {\em Handbook of Discrete
  and Computational Geometry}, chapter~57, pages 1279--1292. CRC Press LLC, 2nd
  edition, 2004.

\bibitem[WW93]{ww-sg-93}
W.~Weil and J.~A. Wieacker.
\newblock Stochastic geometry.
\newblock In P.~M. Gruber and J.~M. Wills, editors, {\em Handbook of Convex
  Geometry}, volume~B, chapter 5.2, pages 1393--1438. North-Holland, 1993.

\end{thebibliography}
\end{document}